\begin{document}
\title{Semi-online Scheduling with Lookahead \thanks{Supported by Veer Surendra Sai University of Technology, Burla, Odisha, 768018 INDIA.}}

%
%\titlerunning{Abbreviated paper title}
% If the paper title is too long for the running head, you can set
% an abbreviated paper title here
%
\author{Debasis Dwibedy\and
Rakesh Mohanty}
%
%\authorrunning{F. Author et al.}
% First names are abbreviated in the running head.
% If there are more than two authors, 'et al.' is used.
%
\institute{Veer Surendra Sai University of Technology, Burla, Odisha 768018, INDIA 
\email{(debasis.dwibedy, rakesh.iitmphd)}@gmail.com}
%\url{http://www.springer.com/gp/computer-science/lncs} \and
%ABC Institute, Rupert-Karls-University Heidelberg, Heidelberg, Germany\\
%\email{\{abc,lncs\}@uni-heidelberg.de}}
%
\maketitle              % typeset the header of the contribution
\begin{abstract}
The knowledge of future partial information in the form of a lookahead to design efficient online algorithms is a theoretically-efficient and realistic approach to solving computational problems. Design and analysis of semi-online algorithms with extra-piece-of-information (EPI) as a new input parameter has gained the attention of the theoretical computer science community in the last couple of decades. Though competitive analysis is a pessimistic worst-case performance measure to analyze online algorithms, it has immense theoretical value in developing the foundation and advancing the state-of-the-art contributions in online and semi-online scheduling. 
In this paper, we study and explore the impact of lookahead as an EPI in the context of online scheduling in identical machine frameworks. We introduce a $k$-lookahead model and design improved competitive semi-online algorithms. For a $2$-identical machine setting, we prove a lower bound of $\frac{4}{3}$ and design an optimal algorithm with a matching upper bound of $\frac{4}{3}$ on the competitive ratio. For a $3$-identical machine setting, we show a lower bound of $\frac{15}{11}$ and design a $\frac{16}{11}$-competitive improved semi-online algorithm.     
\end{abstract}
\keywords{Identical Machines, Lookahead, Makespan, Online, Semi-online, \hspace*{2.0cm}Scheduling.}
\section{Introduction} 
The $m$-machine scheduling is a well-studied NP-Complete problem with immense practical significance in diversified areas of computing, such as multiprocessor scheduling, network routing, transportation, task delegation, project management, and manufacturing, to name a few \cite{Pinedo:16}, \cite{Kress:18}. In this problem, we have to schedule the processing of a list of $n$ jobs on a set of $m$ machines with well-defined constraints and objectives, where $n > m$. Several practically-significant variants of the $m$-machine scheduling problem have been defined and studied in the literature based on machine models, availability of jobs, constraints, and objective functions. Offline and online scheduling are well-known variants of the $m$-machine scheduling problem.    \\
\textbf{Offline and Online Scheduling.} In offline scheduling, the whole list of jobs is available at the outset, while in online scheduling, jobs are given one by one in order \cite{Graham:66}, \cite{Dwibedy:22a}. Each received one must be scheduled immediately and irrevocably without knowledge of successive jobs with an objective to minimize the completion time of the job schedule, i.e., \textit{makespan}.\\
\textbf{Lookahead Model and its Significance.} Lookahead is an impactful realistic concept that helps improve the performance of online scheduling algorithms. When we make a scheduling decision without the knowledge of future inputs, foreseeing a part of the future can help us to make better decisions with additional information. Lookahead can be considered as an \textit{extra piece of information (EPI)} and constitutes a practically-significant model for defining semi-online scheduling problems.\\
\textbf{Semi-online Scheduling.} In practice, neither all jobs are available in advance as in the offline case, nor do they appear in an online fashion. The occurrence of jobs follows an intermediate framework known as semi-online \cite{Liu:96},\cite{Kellerer:97}. Semi-online scheduling is a variant of online scheduling, where in addition to the current job, some EPI about the future ones is given. Semi-online scheduling has practical significance in resource allocation \cite{Liu:96}, network bandwidth utilization \cite{Epstein:05} packet routing \cite{Bohm:16}, distributed data management \cite{Azar:01}, client request processing \cite{Feder:04}, manufacturing and production planning \cite{Yong:01}. In online settings, when some information about future inputs is given before allocating the resources, the semi-online models and algorithms are of considerable interest.\\
\textbf{Competitive Analysis.} The competitive analysis provides a mathematical framework to measure the performance of online and semi-online scheduling algorithms \cite{Tarjan:85}. Let $\sigma$ be a job sequence, $C_{ALG}(\sigma)$ be the makespan obtained by the semi-online algorithm $ALG$ for $\sigma$, and $C_{OPT}(\sigma)$ be the optimum makespan for $\sigma$. The algorithm $ALG$ is $c$-competitive, if there exists a positive number $c$ and a non-negative constant $b$ such that $C_{ALG}(\sigma)\leq c\cdot C_{OPT}(\sigma)+b$ for all $\sigma$. Here, $c$ is the \textit{upper bound (UB)} on the \textit{competitive ratio (CR)} and $c\geq 1$. The smaller value of $c$ indicates a better performance of $ALG$. The \textit{lower bound (LB)} on the CR for a semi-online scheduling problem specifies an instance of the problem for which no semi-online algorithm obtains a better CR than the LB. If $g$ is the $LB$ of a semi-online scheduling problem, then there must be an instance $\sigma$ such that no semi-online algorithm achieves a CR less than $g$. The common practice is to maximize the lower bound and minimize the upper bound on the CR. A semi-online algorithm is optimal if it achieves a CR matching the LB of the problem.\\
\textbf{Related Work.} The semi-online scheduling problem has been extensively investigated in the literature \cite{Dwibedy:22}, \cite{Epstein:18}, \cite{Boyar:17}. In semi-online scheduling, the objective is to improve the best-known bounds on the CR for a specific semi-online setting or to explore practically-significant new EPI to define innovative semi-online models. Here, we highlight only the most relevant semi-online settings and well-known competitive analysis results. Kellerer et al. \cite{Kellerer:97} introduced the buffer concept into semi-online scheduling. A buffer is a storage structure that allows an online algorithm to defer the scheduling decision on the current job by storing the incoming jobs until there is a space in the buffer. When the buffer is full, and a job arrives, an algorithm selects a job from the available ones and schedules it on a machine. Different ways of selecting and assigning jobs lead to various semi-online algorithms. A buffer capable of storing $k$ jobs allows an online algorithm to see $k+1$ jobs before making a scheduling decision. By considering a buffer of size $k (\geq 1)$ Kellerer et al. \cite{Kellerer:97} proposed an optimal algorithm $H_1$ and achieved a CR of $\frac{4}{3}$.  Zhang \cite{Zhang:97} proved a upper bound of $\frac{4}{3}$ on the CR by considering $k=1$. The recent works on deterministic non-preemptive semi-online scheduling with a buffer are \cite{Englert:08}, \cite{Lan:12}, \cite{Sun:13}, \cite{Ding:14}.\\ 
Several online problems have been investigated in the literature by considering lookahead as a parameter. In particular, the influence of lookahead on the competitiveness of online algorithms has been captured in well-known problems such as paging \cite{Albers:93}, \cite{Torng:95}, \cite{Braslauer:96}, $k$-server \cite{Ben:94}, bin packing \cite{Grove:95}, graphs \cite{Haldorsson:92}, \cite{Irani:90} dynamic location \cite{Chung:89}, and list update \cite{Albers:98}.\\
\textbf{Semi-online Scheduling with lookahead.} The semi-online scheduling with $k$-lookahead model enables an online algorithm to foresee the processing times of $k$ future jobs on receiving an incoming one, where $k\geq 1$. The model does not require an explicit structure like a buffer to store and access future jobs. The additional information provided by the lookahead model on future jobs helps an online algorithm minimize the makespan of the job schedule. On the other hand, the unbounded size of the lookahead increases the execution time of an algorithm due to the overheads of extra lookups. We can carefully reduce the time by specifying a bound on the lookahead size. \\ 
\textbf{Practical Significance.} 
In a dynamic project scheduling scenario, relevant jobs arrive on the fly. A project manager allocates resources to each incoming job without knowledge of the future ones. However, a manager often foresees and estimates the processing times of future incoming jobs based on intuition, experience, or good connections with the concerned client to meet the requirements. Nevertheless, efficient resource allocation is a non-trivial challenge in such a scenario, which motivates our study on semi-online scheduling with a lookahead.   \\
\textbf{Research Motivation.} The best-known CR for non-preemptive online scheduling with $2$ and $3$ identical machines settings is $\frac{3}{2}$ and $\frac{5}{3}$ respectively \cite{Graham:66}. Faigle et al. \cite{Faigle:89} proved that no online algorithm achieves a better CR than the best-known ones. An important research challenge is: how much improvement in the CR can be achieved if an online algorithm foresees some portion of future jobs. Several semi-online scheduling models and algorithms answer the above question by considering various new EPI. However, less attention has been paid to online scheduling with lookahead \cite{Mao:94}, \cite{Motwani:98}, \cite{Coleman:04}. According to our knowledge, a maiden work \cite{Zheng:23} has been reported in the literature about semi-online scheduling with lookahead in identical parallel machine settings for the makespan minimization objective. Here, the authors considered that an online algorithm knows the total processing time of first $k$ jobs a priori. By considering $\sum_{i=1}^{k}{p_i}\geq \alpha\delta$, the authors achieved a UB of $\frac{\alpha+2}{\alpha+1}$ on the CR, where $p_i$ is the processing time of a job $J_i$, $\alpha \geq 2$, and $\delta$ is the largest processing time. In this paper, we introduce a new realistic lookahead model and address the following non-trivial research challenge: how much lookahead is sufficient for a semi-online algorithm to beat the best-known CR in $2$ and $3$ identical machine settings? \\
\textbf{Our Contribution.} 
First, we define the $k$-lookahead model in the context of semi-online scheduling with a schematic illustration. In addition, we prove a lower bound of $\frac{4}{3}$ on the CR for a $2$-identical machine setting with $k\geq 1$. Subsequently, we design an optimal deterministic semi-online algorithm named 2-${LA}_1$ and achieve a matching UB on the CR with $k=1$. Furthermore, we highlight interesting remarks and critical observations on the $1$-lookahead in the context of semi-online scheduling on a $2$-identical machine setting. Next, we consider the $3$-identical machine setup and design a $\frac{16}{11}$-competitive deterministic semi-online algorithm named 3-${LA}_1$. We conclude by exploring several research challenges for prospective future work.  
\section{Preliminaries}
In this paper, we use the standard terminologies and notations defined in a recent survey article on semi-online scheduling \cite{Dwibedy:22}. To denote a semi-online scheduling problem setting, we follow the $3$-field ($\alpha|\beta|\gamma$) notation framework of Graham et al. \cite{Graham:79}.
\subsection{$m$-machine Scheduling Problem ($P_m\hspace*{0.1cm}|\hspace*{0.1cm} C_{max}$)}
We define the $m$-machine offline scheduling problem as follows.\\\\
\textit{Given a set $M=\{M_1, M_2, \ldots, M_m\}$ of $m (\geq 2)$ identical parallel machines and a list $J=\langle J_1, J_2, \ldots, J_{n-1}, J_n \rangle$ of $n$ independent jobs, where each job $J_i$ is characterized by its processing time $p_i$. We have to schedule $n$ jobs on $m$ machines such that the makespan $C_{max}$, or the maximum load on any machine is minimized, where load $l_j$ of a machine $M_j$ is the total sum of processing times of the jobs assigned to the machine $M_j$. We denote the problem as $P_m\hspace*{0.1cm}|\hspace*{0.1cm} C_{max}$.}\\\\
\textbf{Optimal Offline Algorithm} assigns a given list of $n$ jobs on $m$ machines such that $\max\{l_j \hspace*{0.1cm} | \hspace*{0.1cm}1\leq j\leq m\}$ is the minimal. \\
In the context of the $m$-machine scheduling problem, it is an open problem to formally define the generic behavior of the optimal offline algorithm. However, the following lower bounds of the optimal makespan $C_{OPT}$ have been used in the literature while computing the CR of online and semi-online scheduling algorithms.\\
\begin{lemma} \cite{Graham:66}.\textit{ Let $p_{max}=\max\{p_i \hspace*{0.1cm}|\hspace*{0.1cm} 1\leq i\leq n\}$. The optimal makespan $C_{OPT}$ for the problem $P_m\hspace*{0.1cm}|\hspace*{0.1cm} C_{max}$ is such that 
%$\max \{p_{max}, \frac{1}{m}\cdot \sum_{i=1}^{n}{p_i}\}$.
\[
    C_{OPT} \geq 
\begin{cases}
   p_{max}\\
   \frac{1}{m}\cdot \sum_{i=1}^{n}{p_i}\\\
\end{cases}
\]}
\end{lemma}
\textbf{Corollary 1.1.} \textit{For any instance $\sigma$ of $P_m \hspace*{0.1cm}|\hspace*{0.1cm} C_{max}$, we have $C_{OPT}\geq \max \{p_{max}, \frac{1}{m}\cdot \sum_{i=1}^{n}{p_i}\}$.}\\\\
\textbf{Corollary 1.2.} \textit{Let $l_j$ be the load of machine $M_j$ and $C_{ALG}(\sigma)$ be the makespan incurred by any algorithm $ALG$ for an instance $\sigma$. Then $C_{ALG}(\sigma)=\max\{l_j \hspace*{0.1cm} | \hspace*{0.1cm}1\leq j\leq m\}$.}\\ \\
Next, we will formally define the lookahead model in the context of semi-online scheduling by considering the definition of $P_m\hspace*{0.1cm}|\hspace*{0.1cm} C_{max}$. 
\subsection{Our Proposed Semi-online Scheduling Model with $k$-lookahead}
A semi-online algorithm receives $n$ independent jobs one by one over a list. Upon receiving a job $J_i$, the algorithm in addition to $p_i$ knows $p_{i+1}, p_{i+2}, \ldots, p_{i+k}$, where $1\leq k < n$ and $p_i >0$, $\forall i$. The received one must be scheduled non-preemptively and irrevocably  without a further clue about $p_{i+(k+1)}, p_{i+(k+2)}, \ldots, p_n$. The objective is to minimize the makespan, denoted by $C_{max}$.\\
\textbf{Illustration of the model with $1$-lookahead $({LA}_1)$.} Given $2$ identical parallel machines. Let us consider an instance $\sigma$ with a list of three jobs arriving one by one, where $\sigma=\langle J_1, J_2, J_3 \rangle$, and $p_1=1$, $p_2=1$, and $p_3=2$. A semi-online algorithm $ALG$ receives job $J_1$ and schedules $J_1$ irrevocably with knowledge of $p_1$ and $p_2$. Immediately after the scheduling of $J_1$, the job $J_2$ arrives, and $ALG$ schedules $J_2$ with knowledge of $p_2$ and $p_3$. Finally, the last job $J_3$ arrives, and $ALG$ assigns $J_3$ with prior knowledge of $p_3$. We capture the timing diagram of the scheduling of $\sigma$ in Figure \ref{fig:illustration.png}. The schedule generated by Graham's LS algorithm without lookahead is shown in Figure \ref{fig:illustration.png}(a), and the schedule generated by the optimal load balancing approach with $1$-lookahead is presented in Figure \ref{fig:illustration.png}(b). It can be observed that an online algorithm with $1$-lookahead can achieve a better makespan than its optimal online counterpart. The extra knowledge of $p_{i+1}$ while scheduling job $J_i$  substantiates minimizing the makespan.
\begin{figure}[h]
\centering
\includegraphics[scale=0.8]{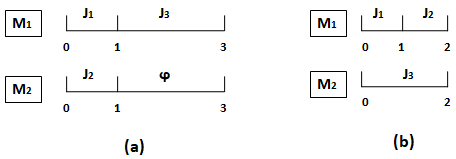}
\caption{Timing Diagram (a) Algorithm LS without Lookahead \hspace{0.1cm} (b)\hspace{0.1cm} Optimal Load Balancing with $1$-Lookahead }
\label{fig:illustration.png}  
\end{figure}
\section{Our Competitive Analysis Results on $2$ Identical Machines with $k$-Lookahead ($P_2\hspace{0.1cm}|\hspace{0.1cm}{LA}_k \hspace{0.1cm}| \hspace{0.1cm}C_{max}$)}
\subsection{Lower Bound Result}
\begin{theorem}
Let $ALG$ be a deterministic semi-online algorithm for the problem $P_2\hspace{0.1cm}|\hspace{0.1cm}{LA}_k \hspace{0.1cm}| \hspace{0.1cm}C_{max}$. Then there exists an instance $\sigma$ such that \hspace*{0.2cm}  $\frac{C_{ALG}(\sigma)}{C_{OPT}(\sigma)}\geq \frac{4}{3}$.\\
\end{theorem}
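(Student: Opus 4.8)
The plan is to construct an explicit adversarial job sequence and use a case analysis on the algorithm's forced decisions to establish the $\frac{4}{3}$ bound. Since $ALG$ is semi-online with $k$-lookahead on $2$ identical machines, the adversary can choose the \emph{entire} suffix of the sequence adaptively after observing each irrevocable placement, subject only to the constraint that $ALG$ already knew the next $k$ processing times when it committed. The key idea is that lookahead does not help against a carefully staged sequence in which the lookahead window reveals only ``balanced'' small jobs, while a large decisive job is deferred just beyond the visible horizon. I would therefore design the sequence so that the first several jobs are identical small jobs (say of unit size), forcing $ALG$ into one of two symmetric load configurations, and then release a large job whose size is tuned so that whichever machine $ALG$ has more heavily loaded, the optimal offline assignment can balance perfectly while $ALG$ cannot.

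First I would fix the simplest instructive case. Consider feeding three unit jobs followed by a job of size $2$, i.e.\ $p_1=p_2=p_3=1$, $p_4=2$, which is exactly the spirit of the illustration already in the paper but pushed one step further. After the three unit jobs, by the pigeonhole principle one machine carries load at least $2$ and the other at most $1$; when the size-$2$ job arrives, $ALG$ must place it on a machine whose resulting load is at least $1+2=3$ in the best case, whereas $C_{OPT}=\max\{p_{\max},\frac{1}{2}\sum p_i\}=\max\{2,\frac{5}{2}\}$ — so I would instead tune the multiplicities. The cleaner construction is to release two jobs of size $1$ and then, depending on whether $ALG$ splits them or stacks them, release a final job of size $2$: this is the canonical $\frac{3}{2}$ lower-bound family for pure online LS, and the subtlety here is showing lookahead of width $k$ cannot defeat it. To get the sharper $\frac{4}{3}$ I would use a sequence whose total is $3$ and whose largest job is $2$, for instance unit jobs totaling the right amount so that $C_{OPT}=\frac{3}{2}$ while $ALG$ is forced to $2$; then $\frac{C_{ALG}}{C_{OPT}}=\frac{4}{3}$.

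The heart of the argument, and the step I expect to be the main obstacle, is the adversary's treatment of the lookahead. Because $ALG$ foresees $p_{i+1},\dots,p_{i+k}$, the adversary cannot simply hide the decisive job one step ahead when $k\geq 2$; it must pad the visible window with jobs that are individually uninformative yet collectively force an imbalance. The technique I would use is to make all jobs within every lookahead window identical, so that any deterministic $ALG$ sees a symmetric, lookahead-invariant local picture and its placement decision carries no more information than it would with $k=0$ on that prefix; the adversary then branches on the realized load gap and appends the large job only once it is safely outside all previously-committed windows. Carrying this out requires showing that for every deterministic policy there is a consistent branch forcing load exactly $2$ against optimum $\frac{3}{2}$ — this bookkeeping over which jobs lie inside which window, and verifying the branch is legal (i.e.\ the adversary never has to ``change'' an already-revealed processing time), is the delicate part. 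Once the forced imbalance and the matching $C_{OPT}$ are pinned down, the inequality $\frac{C_{ALG}(\sigma)}{C_{OPT}(\sigma)}\geq\frac{4}{3}$ follows immediately from Corollary~1.1 and Corollary~1.2.
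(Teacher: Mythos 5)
Your adversary is, at the top level, the same one the paper uses: pad the lookahead windows with identical jobs so the foresight is uninformative, keep the decisive job outside the window until a long prefix has been irrevocably placed, and then branch on the realized imbalance. But the proposal stops exactly where the work is, and none of the concrete instances you offer witnesses the bound. The instance ``total $3$, largest job $2$, $C_{OPT}=\frac{3}{2}$'' is self-contradictory, since Corollary~1.1 gives $C_{OPT}\geq p_{\max}=2$; and $\langle 1,1,1,2\rangle$ yields $C_{ALG}=C_{OPT}=3$ (ratio $1$), because once the size-$2$ job enters the window the algorithm can still balance the remainder. The two ideas that are missing are precisely what the paper supplies. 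First, the bound is only \emph{asymptotic}: the $k+1$ jobs lying inside the final lookahead window are scheduled with full information, so their total must be made negligible. The paper takes a prefix of $n-k-1$ equal jobs, then $k$ unit jobs, then one adaptive job $p_n=y$, and the resulting ratios $\frac{4l_1}{3l_1+4k}$ and $\frac{4l_1}{3l_1+k}$ tend to $\frac{4}{3}$ only as $n\to\infty$; a fixed short sequence cannot close the gap (e.g.\ from prefix loads $(2,0)$ no choice of a fourth job forces a ratio above $1$).

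Second, the case analysis you defer as ``the delicate part'' is the actual proof, and your single branch does not cover all deterministic policies. Your sketch punishes only an algorithm that leaves the prefix balanced and is then hit by a large job; a policy that deliberately \emph{imbalances} the prefix escapes that branch entirely. The paper's dichotomy on the prefix loads $l_1\geq l_2$, observed by the adversary before $p_n$ enters any window, is: if $l_1\geq 2l_2$, make the last job small ($y=k$), so $C_{ALG}\geq l_1$ while $C_{OPT}$ is close to $\frac{l_1+l_2+2k}{2}\leq\frac{3l_1+4k}{4}$; if $l_1<2l_2$, set $y=2l_1-l_2$, so even the best placement gives $C_{ALG}\geq l_2+y=2l_1$ while $C_{OPT}$ is close to $\frac{3l_1+k}{2}$. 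Both branches force a ratio tending to $\frac{4}{3}$, and together they exhaust all outcomes. Your legality observation (the adversary never retracts a revealed processing time, committing to $y$ only at step $n-k$) is correct and is indeed the point of the construction, but without the explicit dichotomy and the limit $n\to\infty$ the argument does not establish the theorem.
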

\begin{proof}
We construct an instance $\sigma = \langle J_1, J_2, \ldots, J_n \rangle$, where $p_i=x$ for $1\leq i\leq n-k-1$, $p_i=1$, for $n-k\leq i\leq n-1$, and $p_n=y$. \\\\
Given the instance $\sigma$, we compute $C_{OPT}(\sigma)$ and $C_{ALG}(\sigma)$. When the $(n-k)^{th}$ job arrives, the jobs $J_1, J_2, \ldots, J_{n-k-1}$ have been scheduled, and the processing time $p_{n-k}, p_{n-k+1}, \ldots, p_{n-1}, p_n$ are known. At this time, let the loads of machines $M_1$ and $M_2$ be $l_1$ and $l_2$. We consider the follwing two cases by assuming $l_1\geq l_2$.\\\\
\hspace*{0.4cm}\textbf{Case 1:} $l_1\geq 2l_2$. Consider $p_n=y=k$. We have $C_{ALG}(\sigma)\geq l_1$, while \\ \hspace*{1.6cm} $C_{OPT}(\sigma)=\frac{l_1+l_2+2k}{2}\leq \frac{l_1+(l_1/2)+k}{2}\leq \frac{3l_1+4k}{4}$. Implies,\\\\
\hspace*{1.6cm} $\frac{C_{ALG}(\sigma)}{C_{OPT}(\sigma)}\geq \frac{4l_1}{3l_1+4k} \rightarrow \frac{4}{3}$, for large $n$.\\\\
\hspace*{0.3cm} \textbf{Case 2:} $l_1 < 2l_2$. Consider $p_n=y=2l_1-l_2$. We have $C_{ALG}(\sigma)\geq \hspace*{1.9cm} l_2+p_n=2l_1$, while \\   
\hspace*{1.8cm} $C_{OPT}(\sigma)=\frac{l_1+l_2+k+2l_1-l_2}{2}\leq \frac{3l_1+k}{2}$. Implies,\\\\
\hspace*{1.8cm} $\frac{C_{ALG}(\sigma)}{C_{OPT}(\sigma)}\geq \frac{4l_1}{3l_1+k} \rightarrow \frac{4}{3}$, for large $n$. \hfill\(\Box\)\\
\end{proof}
%\section{Our Results on Semi-online Scheduling with ${LA}_1$}
%To be Written...\\
\textbf{Research Question:} \textit{How much lookahead is sufficient to achieve a upper bound of $\frac{4}{3}$ on the CR for $P_2\hspace{0.1cm}|\hspace{0.1cm}{LA}_k \hspace{0.1cm}| \hspace{0.1cm}C_{max}$?}\\\\
In the following section, we will address the above research question.
\subsection{An Optimal Semi-online Algorithm with $1$-lookahead : 2-${LA}_1$}
We design a deterministic semi-online algorithm named 2-${LA}_1$ for $2$ identical parallel machines setting by considering a lookahead of size $1$. We prove that algorithm 2-${LA}_1$ is optimal for the problem $P_2\hspace*{0.1cm|\hspace*{0.1cm}{LA}_1 \hspace{0.1cm}|\hspace{0.1cm}C_{max}}$ and has a upper bound of $\frac{4}{3}$ on the CR. We present algorithm 2-${LA}_1$ in \textbf{Algorithm 1}.\\
\begin{algorithm}
\caption{2-${LA}_1$}
\begin{algorithmic}
\scriptsize
\STATE Initially, $l_1=l_2=0$ \\
\STATE When a new job $J_{i}$ arrives, $p_i$ and $p_{i+1}$ are known.\\
%\STATE \hspace*{0.2cm} BEGIN\\
\STATE FOR $i=1$ to $n-1$\\
\STATE \hspace*{0.2cm} BEGIN\\
\STATE \hspace*{0.5cm} IF $(l_1+p_i)\leq \frac{2}{3}\cdot (l_1+l_2+p_i+p_{i+1})$  \\
\STATE \hspace*{0.8cm} THEN assign job $J_i$ to machine $M_1$ \\
\STATE \hspace*{0.8cm} UPDATE $l_1=l_1+p_i$\\
\STATE \hspace*{0.5cm} ELSE \\
\STATE \hspace*{0.8cm} Assign job $J_i$ to machine $M_2$ \\
\STATE \hspace*{0.8cm} UPDATE $l_2=l_2+p_i$\\
\STATE \hspace*{0.2cm} END\\
\STATE $l_{min}\leftarrow \{l_1, l_2\}$\\
\STATE Assign job $J_n$ to machine $M_j$ for which $l_j=l_{min}$, where $j=\{1, 2\}$\\
\STATE UPDATE $l_j=l_j+p_i$\\
\STATE UPDATE $l_1$, $l_2$\\
\STATE Return \hspace*{0.3cm} $C_{2-{LA}_1}=\max\{l_1, l_2\}$
\end{algorithmic}
\end{algorithm}\\
Algorithm 2-${LA}_1$ considers the EPI $p_{i+1}$ for efficient scheduling of each incoming job $J_i$. The objective is to show that $k=1$ is sufficient to achieve an improved CR over the best-known upper bound of $\frac{3}{2}$ on the CR. Algorithm 2-${LA}_1$ imbalances the current total load between machines $M_1$ and $M_2$ such that $l_1\leq 2\cdot l_2$ and the final maximum load of any machine is not more than $\frac{2}{3}\cdot \sum_{i=1}^{n}{p_i}$, or $p_{max}$.\\
In Theorem 2, we prove that algorithm 2-${LA}_1$ has a CR of at least $\frac{4}{3}$ for  $P_2\hspace*{0.1cm|\hspace*{0.1cm}{LA}_1 \hspace{0.1cm}|\hspace{0.1cm}C_{max}}$.\\
\begin{theorem}
There exists an instance $\sigma$ of the problem $P_2\hspace{0.1cm}|\hspace{0.1cm}{LA}_1 \hspace{0.1cm}| \hspace{0.1cm}C_{max}$ such that $\frac{C_{2-{LA}_1}(\sigma)}{C_{OPT}(\sigma)}\geq \frac{4}{3}$.
\end{theorem}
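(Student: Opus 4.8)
The plan is to exhibit a single parametrised family of concrete instances on which algorithm 2-${LA}_1$ is provably forced into a $\frac{4}{3}$ ratio, and to certify this by tracing the algorithm directly rather than by any structural argument. The candidate instance is $\sigma = \langle a, a, 2a, 2a \rangle$ with $a > 0$; the base case $a=1$ already yields the bound, and the scale factor $a$ is retained only to drive $C_{OPT}(\sigma)$ to infinity and thereby absorb the additive constant $b$ from the definition of competitiveness. The guiding idea is to pick the job sizes so that at the \emph{third} assignment the algorithm's acceptance test $l_1+p_i \leq \frac{2}{3}(l_1+l_2+p_i+p_{i+1})$ is satisfied with equality: at that instant the lookahead to $p_{i+1}=p_4$ gives no warning, and the medium job is stacked onto the already heavier machine.

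First I would run the loop by hand. For $J_1$ the test reads $a \leq \frac{2}{3}\cdot 2a$, which holds, so $l_1 = a$. For $J_2$ it reads $2a \leq \frac{2}{3}\cdot 4a = \frac{8a}{3}$, which holds, so $l_1 = 2a$. The decisive step is $J_3$: here $l_1+p_3 = 4a$ while $\frac{2}{3}(l_1+l_2+p_3+p_4) = \frac{2}{3}\cdot 6a = 4a$, so the non-strict inequality places $J_3$ on $M_1$ and updates $l_1 = 4a$. Finally $J_4$ is sent to the lighter machine $M_2$, giving $l_2 = 2a$. Hence, by Corollary 1.2, $C_{2-{LA}_1}(\sigma) = \max\{4a, 2a\} = 4a$.

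Next I would bound $C_{OPT}(\sigma)$ by the explicit balanced schedule that puts $\{J_1, J_3\}$ on one machine and $\{J_2, J_4\}$ on the other, producing loads $3a$ and $3a$; since $\frac{1}{2}\sum_{i} p_i = 3a$, Corollary 1.1 shows this schedule is optimal, so $C_{OPT}(\sigma) = 3a$. Combining the two computations gives $\frac{C_{2-{LA}_1}(\sigma)}{C_{OPT}(\sigma)} = \frac{4a}{3a} = \frac{4}{3}$, which is exactly the claim; letting $a \to \infty$ additionally confirms that no fixed additive constant can pull the ratio below $\frac{4}{3}$.

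I expect the only genuine difficulty to lie in \emph{finding} the instance rather than in verifying it: the construction must simultaneously (i) let the algorithm fill $M_1$ to precisely $\frac{2}{3}$ of the total at the moment it commits the medium job, and (ii) remain perfectly splittable into two equal halves so that $C_{OPT}$ is as small as possible and the ratio is maximised. Everything after the instance is fixed reduces to a four-line trace. As a final robustness check I would note that the conclusion is insensitive to the tie-breaking at the last step and to whether the threshold is read as $\leq$ or $<$: even under a strict inequality, $J_3$ merely moves to $M_2$ (leaving $l_1=l_2=2a$), after which $J_4$ still creates a machine of load $4a$, so $C_{2-{LA}_1}(\sigma)=4a$ in either reading.
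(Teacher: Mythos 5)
Your proof is correct and takes essentially the same approach as the paper: exhibit a concrete instance on which the acceptance test $l_1+p_i\leq \frac{2}{3}(l_1+l_2+p_i+p_{i+1})$ keeps loading $M_1$ up to $\frac{2}{3}$ of the total processing time while the optimum splits the jobs evenly, forcing the ratio $\frac{4}{3}$. Your four-job instance $\langle a,a,2a,2a\rangle$ is in fact cleaner and more easily verifiable than the paper's (which uses $n-3$ unit jobs followed by three large ones and asserts that \emph{every} $J_i$ with $i\leq n-1$ passes the test, a claim that fails for the unit jobs once $i\geq 3$), and your step-by-step trace, optimality certificate for $C_{OPT}(\sigma)=3a$, and tie-breaking robustness check all hold up.
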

\begin{proof}
Consider an instance $\sigma$ with a sequence of $n$ independent jobs, where $p_i=1$, for $1\leq i\leq n-3$, $p_{n-2}=n$, $p_{n-1}=2n+3$, and $p_n=2n$.\\\\
Initially, the load $l_1=l_2=0$. In $\sigma$, jobs are given one by one in order, and each job $J_i$ from $i=1$ to $n-1$ satisfies the condition $l_1+p_i \leq \frac{2}{3}\cdot (l_1+l_2+p_i+p_{i+1})$. Therefore, algorithm 2-${LA}_1$ schedules each incoming job $J_i$ to machine $M_1$, where $1\leq i\leq n-1$.\\
We now have $l_1=4n$ and $l_2=0$. Algorithm 2-${LA}_1$ assigns the last job $J_n$ to the current least loaded machine $M_2$. Hence, $C_{2-{LA}_1}(\sigma)\geq 4n$, while $C_{OPT}=3n$. Therefore, $\frac{{C_{2-{LA}_1}(\sigma)}}{{C_{OPT}(\sigma)}}\geq \frac{4}{3}$. \hfill\(\Box\)
\end{proof}
\subsection{Upper Bound Result}
Before proving the upper bound on the CR of algorithm 2-${LA}_1$, we consider some practically-significant problem instances. We show by following lemmas that algorithm 2-${LA}_1$ is at most $\frac{4}{3}$-competitive for the considered instances of $P_2\hspace*{0.1cm|\hspace*{0.1cm}{LA}_1 \hspace{0.1cm}|\hspace{0.1cm}C_{max}}$.\\
\begin{lemma}
Let $\sigma_1$ be an instance of $P_2\hspace{0.1cm}|\hspace{0.1cm}{LA}_1 \hspace{0.1cm}| \hspace{0.1cm}C_{OPT}$, where $p_i=x$, $\forall J_i$, and $x\geq 1$.  Algorithm 2-${LA}_1$ is such that $\frac{C_{2-{LA}_1}(\sigma_1)}{C_{max}(\sigma_1)}\leq \frac{4}{3}$.
\end{lemma}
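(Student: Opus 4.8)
The plan is to trace the behaviour of 2-${LA}_1$ on the uniform instance $\sigma_1$ and reduce the whole analysis to counting how the jobs get distributed across the two machines. First I would specialise the assignment test $l_1+p_i\leq \frac{2}{3}(l_1+l_2+p_i+p_{i+1})$ to the case $p_i=p_{i+1}=x$ (valid for every $i$ from $1$ to $n-1$, since $p_n=x$ as well). The test collapses to the clean threshold $l_1\leq 2l_2+x$. Writing $l_1=ax$ and $l_2=bx$, where $a$ and $b$ count the jobs currently placed on $M_1$ and $M_2$, the rule becomes: route the next job to $M_1$ iff $a\leq 2b+1$.

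Next I would establish two invariants by induction over the main loop (jobs $J_1,\ldots,J_{n-1}$): (i) $a\geq b$ at all times, so $M_1$ is never the lighter machine; and (ii) $a\leq 2b+2$ after every assignment. Invariant (ii) holds because a job lands on $M_1$ only when $a\leq 2b+1$, after which $a$ grows by exactly one, while a placement on $M_2$ only slackens the bound. Invariant (i) holds because whenever the algorithm turns to $M_2$ the test has just failed, forcing $a\geq 2b+2$, which leaves a comfortable margin over $b+1$.

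From invariant (ii) together with $a+b=n-1$ at the end of the loop I would deduce $a\leq \frac{2n}{3}$, i.e. the load on $M_1$ is at most $\frac{2n}{3}x=\frac{2}{3}\sum_i p_i$. By invariant (i) the final job $J_n$ is assigned to the lighter machine $M_2$, so the makespan is $\max\{ax,(b+1)x\}$; since $b+1=n-a$ and $a\geq \frac{n-1}{2}$ (again from (i)), both arguments of the maximum are bounded by $\frac{2n}{3}x$ whenever $n\geq 3$, and $n\geq 3$ is guaranteed because $n>m=2$. Hence $C_{2-{LA}_1}(\sigma_1)\leq \frac{2n}{3}x$. Finally I would invoke Corollary 1.1 with $m=2$ to get $C_{max}(\sigma_1)\geq \frac{1}{2}\sum_i p_i=\frac{nx}{2}$ and divide, obtaining $\frac{C_{2-{LA}_1}(\sigma_1)}{C_{max}(\sigma_1)}\leq \frac{(2n/3)x}{(nx/2)}=\frac{4}{3}$.

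The main obstacle is not any one calculation but phrasing the two invariants so that they survive the induction together: I must be careful that the $+2$ slack in invariant (ii) is exactly what is needed (a tighter bound fails at the step where the algorithm has just filled $M_1$), and that in the final step it is the \emph{larger} of $ax$ and $(b+1)x$, not merely $l_1$, that stays under the $\frac{2n}{3}x$ threshold; this is why invariant (i) must be kept throughout rather than used only at the end.
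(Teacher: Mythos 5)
Your proof is correct and follows essentially the same route as the paper: both arguments reduce to counting how many of the $n$ equal-length jobs land on each machine and comparing the heavier count against $C_{OPT}\geq \frac{nx}{2}$. The paper merely asserts the $\lfloor 2n/3\rfloor$ versus $\lceil n/3\rceil$ split, whereas you derive the reduced threshold rule $a\leq 2b+1$ and supply the inductive invariants ($a\leq 2b+2$ and $a\geq b$) that actually justify that split and control the final assignment of $J_n$, so your write-up fills in the justification the paper omits.
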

\begin{proof}
Let us consider $n$ jobs in $\sigma_1$. Given that $\sum_{i=1}^{n}{p_i}=n\cdot x$ as $p_i=x$, $\forall J_i$. Irrespective of $n$ is even or odd, algorithm 2-${LA}_1$ assigns $\lfloor \frac{2n}{3} \rfloor$ jobs to machine $M_1$ and $\lceil \frac{n}{3} \rceil$ jobs to machine $M_2$. Therefore, $C_{2-{LA}_1}(\sigma_1)\leq \frac{2n}{3}\cdot x$. If $n$ is even, algorithm $OPT$ schedules $\frac{n}{2}$ jobs to each of the machines, and incurs $C_{OPT}(\sigma_1)=\frac{n}{2}\cdot x$. Therefore, $\frac{C_{2-{LA}_1}(\sigma_1)}{C_{OPT}(\sigma_1)}\leq \frac{4}{3}$. \\\\
If $n$ is odd, algorithm $OPT$ assigns $\lceil \frac{n}{2} \rceil$ jobs to one machine and the remaining jobs to the other, incurring $C_{OPT}(\sigma_1)=\lceil \frac{n}{2} \rceil \cdot x$, while $C_{2-{LA}_1}(\sigma_1)\leq \frac{2n}{3}\cdot x$.\\ Therefore, $\frac{C_{2-{LA}_1}(\sigma_1)}{C_{OPT}(\sigma_1)}\leq \frac{4}{3}$. \hfill\(\Box\)\\
\end{proof}
\textbf{Corollary 2.1.} \textit{Let $\sigma_1$ consists of $n$ jobs, where $n=6x$, $x\geq 1$, and $p_i=1$, $\forall i$. \hspace*{2.3cm}Then $\frac{C_{2-{LA}_1}(\sigma_1)}{C_{OPT}(\sigma_1)}\leq \frac{4}{3}$}.\\\\
\textbf{Remark 1:} No algorithm can outperform Graham's LS strategy for scheduling a sequence of equal length jobs in $2$-identical machine setting. Therefore, LS algorithm is optimal for $\sigma_1$. \\\\
\begin{theorem}
Let $\sigma$ be an instance of $P_2\hspace{0.1cm}|\hspace{0.1cm}{LA}_1 \hspace{0.1cm}| \hspace{0.1cm}C_{max}$. Algorithm 2-${LA}_1$ is such that $\frac{C_{2-{LA}_1}(\sigma)}{C_{OPT}(\sigma)}\leq \frac{4}{3}$ for all $\sigma$.
\end{theorem}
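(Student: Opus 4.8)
The plan is to bound the two final machine loads directly in terms of the total work $P=\sum_{i=1}^{n}p_i$ and the largest processing time $p_{max}$, and then invoke Corollary 1.1, which gives $C_{OPT}(\sigma)\geq\max\{p_{max},\tfrac{1}{2}P\}$. Concretely, I would try to establish the structural bound $C_{2-{LA}_1}(\sigma)\leq\max\{p_{max},\tfrac{2}{3}P\}$; since $\tfrac{2}{3}P=\tfrac{4}{3}\cdot\tfrac{1}{2}P\leq\tfrac{4}{3}C_{OPT}(\sigma)$ and $p_{max}\leq C_{OPT}(\sigma)$, such a bound would immediately yield the claimed ratio $\tfrac{4}{3}$. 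Everything therefore reduces to controlling the loads $l_1,l_2$ that the algorithm produces.

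First I would analyse the main loop on $J_1,\ldots,J_{n-1}$ and let $L_1,L_2$ be the loads at its end. Consider the last job $J_t$ that the algorithm assigns to $M_1$. At that step the acceptance test held, so $L_1=l_1+p_t\leq\tfrac{2}{3}\bigl(l_1+l_2+p_t+p_{t+1}\bigr)=\tfrac{2}{3}\bigl(\sum_{i=1}^{t}p_i+p_{t+1}\bigr)\leq\tfrac{2}{3}P$, because $l_1+l_2+p_t$ is exactly the prefix sum $\sum_{i=1}^{t}p_i$ and $p_{t+1}$ is one further term. Hence $L_1\leq\tfrac{2}{3}P$ unconditionally. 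For $M_2$ I would use the complementary fact: each time a job is routed to $M_2$ the test failed, i.e. $l_1+p_i>2l_2+2p_{i+1}$, which says $M_1$ was already heavy relative to $M_2$; from this I would argue that $M_2$ stays light unless it is essentially holding a single dominating job, whose load is at most $p_{max}$.

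Next I would handle the final job $J_n$, which is placed on the currently least-loaded machine, by two cases. If $J_n\to M_2$, the makespan is $\max\{L_1,L_2+p_n\}$; the first term is $\leq\tfrac{2}{3}P$ by the above, and for the second I would combine $L_2\leq L_1$ with the volume bound. If $J_n\to M_1$, the makespan is $\max\{L_1+p_n,L_2\}$, and here $J_n$ went to the lighter machine, so $L_1\leq L_2$ and hence $L_1\leq\tfrac{1}{2}(P-p_n)$, giving $L_1+p_n\leq\tfrac{1}{2}(P+p_n)$. When $p_n\leq\tfrac{1}{3}P$ this is at most $\tfrac{2}{3}P$, and when $p_n$ is close to $P$ it is dominated by the $p_{max}$ bound.

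The hard part will be exactly the last case: $J_n$ landing on the lighter machine $M_1$ when $p_n$ (equivalently $p_{max}$) is of intermediate size, roughly $\tfrac{1}{3}P<p_n<\tfrac{3}{5}P$. In this regime the greedy final placement can push the previously light machine above $\tfrac{2}{3}P$, so the clean bound $\max\{p_{max},\tfrac{2}{3}P\}$ need not hold and neither $C_{OPT}$ lower bound by itself delivers the factor $\tfrac{4}{3}$. Resolving it seems to require showing that whenever the algorithm reaches this configuration, the optimum is itself forced to be large --- for instance because the earlier failed acceptance tests guarantee that $M_2$ already contains a job big enough to couple $C_{OPT}$ to the final value of $l_1$. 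This interaction between the lookahead-driven routing of the main loop and the purely greedy placement of $J_n$ is where I expect the genuine difficulty to sit; in particular, the natural invariant $l_1\leq 2l_2$ maintained during the loop can be destroyed by the last assignment, so the argument cannot lean on it and must instead extract sharper information from the $M_2$-placement inequalities.
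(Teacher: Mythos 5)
Your outline contains one genuinely correct and useful step: for the last job $J_t$ that the loop accepts onto $M_1$, the acceptance test plus the one lookahead term gives $L_1=l_1+p_t\leq \frac{2}{3}\bigl(\sum_{i=1}^{t}p_i+p_{t+1}\bigr)\leq \frac{2}{3}P$ with $P=\sum_{i=1}^{n}p_i$, so the heavy machine produced by the loop is under control. But as written the argument does not close, and the two places where it fails are exactly the ones you flag yourself. First, you never bound the final loop-load of $M_2$. The failed-test inequality $l_1+p_i>2l_2+2p_{i+1}$ yields, for the last job $J_s$ routed to $M_2$, only $l_2<\frac{1}{3}P$ before $J_s$ and hence $L_2<\frac{1}{3}P+p_s$; when $p_s>\frac{1}{3}P$ this gives only $\frac{5}{3}C_{OPT}$, and the claim that $M_2$ is then ``essentially holding a single dominating job'' is precisely what needs to be proved, not an argument. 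Second, the regime you isolate ($J_n$ placed greedily on the lighter machine with $\frac{1}{3}P<p_n<\frac{3}{5}P$) is a real obstruction: if the loop could leave the two machines balanced near $\frac{1}{2}(P-p_n)$ each while $C_{OPT}=\frac{1}{2}P$ (e.g.\ processing times $0.1,0.2,0.3,0.4$ admit a perfect partition), the resulting makespan $\frac{1}{2}(P+p_n)$ would exceed $\frac{4}{3}C_{OPT}$. The theorem therefore hinges on showing that the lookahead condition makes such a balanced pre-$J_n$ configuration unreachable; you state that this ``seems to require'' such an argument but do not supply it, so there is a genuine gap rather than a complete proof.

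For comparison, the paper takes a different route: a minimal-counterexample argument. It assumes a shortest violating instance, argues the violation must be caused by the final job $J_t$, invokes the invariant $l_{max}\leq\frac{2}{3}\sum_{i=1}^{t-1}p_i$ for the pre-$J_t$ schedule to deduce $p_t>\sum_{i=1}^{t-1}p_i$, and concludes that the algorithm would then have isolated $J_t$ on its own machine, a contradiction. Your direct load-bounding decomposition is legitimate in principle and arguably more transparent about where the difficulty lies, but it ultimately needs the same kind of invariant-level control of the configuration just before $J_n$ that your sketch leaves unestablished.
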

\begin{proof}
We use the method of contradiction to prove the theorem. \\
Assume that there exists an instance $\sigma$ with minimum number of jobs, which contradicts the theorem, where $\sigma=\langle J_1, J_2, \ldots , J_t\rangle$. Implies, \\\\ \hspace*{4.2cm}$\frac{C_{2-{LA}_1}(\sigma)}{C_{OPT}(\sigma)} > \frac{4}{3}$. \\\\
Suppose, in the schedule generated by algorithm 2-${LA}_1$ for $\sigma$, job $J_x$ is the last job that completes its execution and $x<t$. Therefore, the instance $\sigma_1=\langle J_1, J_2, \ldots , J_x\rangle$ makes $C_{2-{LA}_1}(\sigma_1)=C_{2-{LA}_1}(\sigma)$, while $C_{OPT}(\sigma_1) < C_{OPT}(\sigma)$.\\\\ Implies, 
\hspace*{0.2cm} $\frac{C_{2-{LA}_1}(\sigma_1)}{C_{OPT}(\sigma_1)} > \frac{C_{2-{LA}_1}(\sigma)}{C_{OPT}(\sigma)} > \frac{4}{3}$. \\\\
This conveys, $\sigma_1$ is the smallest counterexample with minimum number of jobs, which contradicts our assumption on minimality of $\sigma$ with $t$ jobs. Hence, job $J_x$ does not exist and the instance $\sigma$ is the smallest counterexample in terms of number of jobs such that $\frac{C_{2-{LA}_1}(\sigma)}{C_{OPT}(\sigma)} > \frac{4}{3}$.\\ Therefore, for all other instances $\sigma^{'}$ with $t-1$ jobs, $\frac{C_{2-{LA}_1}(\sigma^{'})}{C_{OPT}(\sigma^{'})} \leq \frac{4}{3}$.\\\\
Algorithm 2-${LA}_1$ imbalances the total load on machines $M_1$ and $M_2$ such that $l_1\geq l_2$ and $l_1\leq \frac{2}{3}\cdot \sum_{i=1}^{t-1}{p_i}$. Therefore, we have $\frac{C_{2-{LA}_1}(\sigma^{'})}{C_{OPT}(\sigma^{'})} \leq \frac{4}{3}$. Clearly, the inclusion and scheduling of job $J_t$ makes $\frac{C_{2-{LA}_1} (\sigma)}{C_{OPT}(\sigma)} > \frac{4}{3}$.\\\\
Prior to scheduling $J_t$, we have $l_{max}=\frac{2}{3}\cdot \sum_{i=1}^{t-1}{p_i}$ and $l_{min}=\frac{1}{3}\cdot \sum_{i=1}^{t-1}{p_i}$.\\\\ Algorithm 2-${LA}_1$ schedules $J_t$ on $M_j$ with $l_j=l_{min}$. Implies, \\\\
\hspace*{2.2cm} $l_{min}+p_t > \frac{2}{3}\cdot \sum_{i=1}^{t}{p_i}$\\
$\implies \hspace{0.3cm}\frac{1}{3}\cdot \sum_{i=1}^{t-1}{p_i}+p_t > \frac{2}{3}\cdot \sum_{i=1}^{t-1}{p_i}+\frac{2}{3}\cdot p_t$ \\\\
Without loss of generality, we assume that $l_2 < l_1$, implies, $l_2+p_t > l_1+\frac{2}{3}\cdot p_t$\\
$\implies \hspace{0.3cm} l_2+p_t > 2\cdot l_2+\frac{2}{3}\cdot p_t$\\
$\implies \hspace*{0.3cm} p_t > l_1+l_2$, implies, $p_t > \sum_{i=1}^{t-1}{p_i}$.\\\\ In such a case algorithm 2-${LA}_1$ schedules each incoming job $J_i$ from $i=1$ to $t-1$ on machine $M_1$ and assigns job $J_t$ to machine $M_2$ to incur $C_{2-{LA}_1}(\sigma)=p_t$, while $C_{OPT}(\sigma)=p_t$. Implies, $\frac{C_{2-{LA}_1}(\sigma)}{C_{OPT}(\sigma)}=1\leq \frac{4}{3}$, which is a contradiction on our assumption on $\sigma$.\\\\
Therefore, there does not exist a counterexample to the theorem. We can now conclude that the theorem holds for all instances. \hfill\(\Box\)
\end{proof}
\textbf{Observations}
\begin{itemize}
\item Increasing the value of lookahead from $1$ to any $k$ would not lead to a better CR than $\frac{4}{3}$.
\item The proposed $1$-lookahead model and algorithm consider the arrival and scheduling of one job at a time with knowledge of non-zero positive processing time of the current and the next job. The model can not achieve a better CR than $\frac{3}{2}$ if it considers $p_i=0$ for any $J_i$, and scheduling of $J_{i+1}$ without knowledge of $p_{i+2}$. For example, consider an instance $\sigma=\langle J_1, J_2, J_3 \rangle$ with $p_1=p_2=1$, and $p_3=y$. Let $ALG$ be a semi-online algorithm. If $ALG$ assigns $J_1$ and $J_2$ to the same machine, then consider $p_3=y=0$. Implies, $C_{ALG}(\sigma)=2$, while $C_{OPT}(\sigma)=1$.\\
If $ALG$ schedules $J_1$ and $J_2$ on different machines, then consider $p_3=y=2$. Implies, $C_{ALG}(\sigma)=3$, while $C_{OPT}(\sigma)=2$. Therefore, $\frac{C_{ALG}(\sigma)}{C_{OPT}(\sigma)}\geq \frac{3}{2}$.\\
Hence, our model has significance over the traditional lookahead models concerning performance improvement.
\end{itemize}
\section{Our Competitive Analysis Results on $3$ Identical Machines with $1$-Lookahead ($P_3\hspace{0.1cm}|\hspace{0.1cm}{LA}_1 \hspace{0.1cm}| \hspace{0.1cm}C_{max}$)}
Graham \cite{Graham:66} explored the following multiprocessing timing anomaly. An increase in the number of machines does not necessarily guarantee the minimization of makespan. Graham proved that the LS algorithm is the optimal one with CR of $\frac{5}{3}\approx 1.66$ for $3$-identical machine setting. In this section, we study the problem $P_3\hspace*{0.1cm}|\hspace*{0.1cm}{LA}_1 \hspace*{0.1cm}|\hspace*{0.1cm}C_{max}$ and achieve an improved CR of $\frac{16}{11}\approx 1.45$.
\subsection{Lower Bound Result}
\begin{theorem}
Let $ALG$ be a deterministic semi-online algorithm for the problem $P_3\hspace{0.1cm}|\hspace{0.1cm}{LA}_1 \hspace{0.1cm}| \hspace{0.1cm}C_{max}$. Then there exists an instance $\sigma$ such that $\frac{C_{ALG}(\sigma)}{C_{OPT}(\sigma)}\geq \frac{15}{11}$.
\end{theorem}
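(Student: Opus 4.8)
The plan is to exhibit an adaptive adversary that first floods the three machines with a long prefix of identical tiny jobs and then, depending on the load vector the deterministic algorithm has produced, appends one of three possible ``trap'' tails. Concretely, I would present $N$ unit jobs with $N$ large, so that after the prefix the total load is $S=N$ and, by Corollary 1.1, the optimal offline makespan for the prefix alone is $\lceil N/3 \rceil \approx N/3$. Write the resulting machine loads as $l_1 \ge l_2 \ge l_3$ with $l_1+l_2+l_3=N$. The crucial point is that over a prefix of unit jobs a lookahead of size $1$ is asymptotically worthless: while placing a unit job the algorithm only ever sees one further unit job (or, at the very last step, a single trap job), so its prefix configuration $(l_1,l_2,l_3)$ is essentially forced by its deterministic rule and carries no early warning of which tail is coming.

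Next I would calibrate three tails, each forcing the ratio $\frac{15}{11}$ exactly at a threshold on one coordinate of the load vector. (i) \emph{Stop immediately}: then $C_{ALG}=l_1$ and $C_{OPT}\approx N/3$, so the ratio is $\approx 3l_1/N$, which reaches $\frac{15}{11}$ when $l_1\ge\frac{5}{11}N$. (ii) \emph{Append two jobs of size $N$}: the algorithm's best response fills its two lightest machines, giving $C_{ALG}=l_2+N$, while $C_{OPT}=N$ (each big job alone, the prefix balanced on the third machine); the ratio $1+l_2/N$ reaches $\frac{15}{11}$ when $l_2\ge\frac{4}{11}N$. (iii) \emph{Append one job of size $N/2$}: the best response puts it on the lightest machine, giving $C_{ALG}=l_3+N/2$ and $C_{OPT}=N/2$, so the ratio $1+2l_3/N$ reaches $\frac{15}{11}$ when $l_3\ge\frac{2}{11}N$. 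Since $\frac{5}{11}+\frac{4}{11}+\frac{2}{11}=1$ while $l_1+l_2+l_3=N$, at least one of $l_1\ge\frac{5}{11}N$, $l_2\ge\frac{4}{11}N$, $l_3\ge\frac{2}{11}N$ must hold, so the adversary always has a winning tail and forces the ratio to at least $\frac{15}{11}$. In fact the constant is pinned down by this tiling: requiring a common ratio $r$ with thresholds $r/3$, $r-1$, $(r-1)/2$ summing to $1$ gives $11r=15$.

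The step I expect to be the main obstacle is verifying that each trap survives the $1$-lookahead and that the claimed best responses are genuinely optimal for the algorithm. For tail (ii) the algorithm does learn both trap jobs (it sees the second via lookahead while placing the first), so I must argue that even with full knowledge of the two size-$N$ jobs the best placement onto the already-committed base loads is $l_2+N$: putting them on $M_2$ and $M_3$ beats every alternative, including stacking them or touching $M_1$. Symmetrically I must rule out that the single unit of foresight available at the last prefix step lets the algorithm escape a threshold; this is handled by taking $N\to\infty$, so that a one-job correction is negligible and the forced ratio tends to $\frac{15}{11}$, matching the additive-constant competitive definition. The rest is bookkeeping: confirming each $C_{OPT}$ through Corollary 1.1 (the big jobs dominate $p_{max}$), and checking that under the complementary assumptions $l_1<\frac{5}{11}N$ and $l_2<\frac{4}{11}N$ the lightest-machine terms $l_2+N$ and $l_3+N/2$ indeed exceed $l_1$, so they determine $C_{ALG}$ in tails (ii) and (iii).
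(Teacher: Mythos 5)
Your proposal is correct in substance but follows a genuinely different route from the paper. The paper's proof is a finite, explicit adversary: it opens with concrete jobs $J_1/7$, $J_2/4$, branches on every possible placement of the first three or four jobs, and in each branch appends jobs (of sizes $4,7,11$ or $4,8,7$, etc.) so that the total is $33$, $C_{OPT}=11$, and $C_{ALG}\geq 15$. Your argument is instead the classical flooding/pigeonhole scheme (in the style of the Faigle--Kern--Tur\'an lower bounds): a long prefix of $N$ unit jobs forces a load vector $l_1\geq l_2\geq l_3$ with $l_1+l_2+l_3=N$, and the three calibrated tails pin the ratio at the thresholds $\frac{5}{11}N$, $\frac{4}{11}N$, $\frac{2}{11}N$, which tile $N$ exactly; the identity $\frac{r}{3}+(r-1)+\frac{r-1}{2}=1\Rightarrow r=\frac{15}{11}$ is verified correctly, as are the optimal offline makespans for each tail and the claims that the lightest-machine placements dominate $l_1$ under the complementary hypotheses. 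Your approach buys an explanation of \emph{why} the constant is $\frac{15}{11}$ and scales transparently to other machine counts, whereas the paper's approach buys a single small, fully explicit instance on which the ratio is attained exactly. The one caveat in your version is that the $1$-lookahead forces the adversary to commit to the first tail job one step early, so you only obtain the ratio asymptotically, i.e.\ $\geq \frac{15}{11}-O(1/N)$; this is acceptable here (it matches the additive-constant definition of competitiveness, and the paper's own Theorem~1 argues the same way with ``$\rightarrow\frac{4}{3}$ for large $n$''), but if you want the literal statement ``there exists $\sigma$ with ratio $\geq\frac{15}{11}$'' you should either note that the sorted loads after $N$ jobs coordinate-wise dominate those after $N-1$ jobs (so the chosen threshold survives the last unit placement up to the $N$ versus $N-1$ normalization) and round $N$ suitably, or fall back on an explicit finite instance as the paper does.
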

\begin{proof}
We consider the adverserial instances of the problem to prove the lower bound. Let us assume that the adversary knows the nature of $ALG$ and dynamically generates an instance $\sigma$ such that $\frac{C_{ALG}(\sigma)}{C_{OPT}(\sigma)}\geq \frac{15}{11}$. We examine through the following cases, various strategies of scheduling $\sigma$ that $ALG$ could employ. We use the notation $J_i/p_i$ to denote a job $J_i$ and its corresponding processing time $p_i$. Without loss of generality, let us consider an instance $\sigma=\{J_1/7, J_2/4, J_3/p_3, \ldots, J_n/p_n\}$, where $n\geq 3$. Intially, the loads $l_1=l_2=l_3=0$.\\\\
\textbf{Case 1:} If $ALG$ assigns $J_1$, $J_2$ and $J_3$ to three different machines  (say $M_1$, $M_2$ and $M_3$ respectively), or to the same machine. Consider $p_3=4$, $p_4=7$ and $p_5=11$. We have $C_{ALG}(\sigma)\geq 15$, while $C_{OPT}(\sigma)=11$. Therefore, $\frac{C_{ALG}(\sigma)}{C_{OPT}(\sigma)}\geq \frac{15}{11}$.\\\\
\textbf{Case 2:} If $ALG$ assigns $J_1$, $J_2$ to the same machine (say $M_1$) and $J_3$ to a different machine (say $M_2$). Consider $p_3=4$, now $l_1=11$, $l_2=4$ and $l_3=0$.\\
\hspace*{1.0cm} \textbf{Sub-case 2.1:} If $J_4$ is assigned to $M_3$. Consider $p_4=7$ and $p_5=11$.\\ 
\hspace*{1.0cm} We have $C_{ALG}(\sigma)\geq 15$, while $C_{OPT}(\sigma)=11$. Therefore, $\frac{C_{ALG}(\sigma)}{C_{OPT}(\sigma)}\geq \frac{15}{11}$.\\
\hspace*{1.0cm} \textbf{Sub-case 2.2:} If $J_4$ is assigned to $M_2$. Consider $p_4=11$ and $p_5=7$.\\
\hspace*{1.0cm} Implies, $C_{ALG}(\sigma)\geq 15$, while $C_{OPT}(\sigma)=11$.  Therefore, $\frac{C_{ALG}(\sigma)}{C_{OPT}(\sigma)}\geq \frac{15}{11}$. \\ 
\hspace*{1.0cm} \textbf{Sub-case 2.3:} If $J_4$ is assigned to $M_1$. Consider $p_4=4$ and $p_5=11$.\\
\hspace*{1.0cm} Clearly, $C_{ALG}(\sigma)\geq 15$, while $C_{OPT}(\sigma)=11$.  Therefore, $\frac{C_{ALG}(\sigma)}{C_{OPT}(\sigma)}\geq \frac{15}{11}$. \\\\
\textbf{Case 3:} If $ALG$ assigns $J_1$, $J_2$ to different machines (say $M_1$ and $M_2$ respectively).\\
\hspace*{1.0cm} \textbf{Sub-case 3(a):} Job $J_3$ is assigned to $M_1$. Consider $p_3=4$, now $l_1=11$,\\
\hspace*{1.0cm} $l_2=4$ and $l_3=0$.\\
\hspace*{1.3cm} \textbf{Sub-case 3(a).1:} If $J_4$ is assigned to $M_3$. Consider $p_4=7$ and $p_5=11$.\\
\hspace*{1.3cm} We have $C_{ALG}(\sigma)\geq 15$, while $C_{OPT}(\sigma)=11$. Therefore, $\frac{C_{ALG}(\sigma)}{C_{OPT}(\sigma)}\geq \frac{15}{11}$.\\
\hspace*{1.3cm} \textbf{Sub-case 3(a).2:} If $J_4$ is assigned to $M_1$. Consider $p_4=4$ and $p_5=11$.\\
\hspace*{1.3cm} Clearly, $C_{ALG}(\sigma)\geq 15$, while $C_{OPT}(\sigma)=11$.  Therefore, $\frac{C_{ALG}(\sigma)}{C_{OPT}(\sigma)}\geq \frac{15}{11}$.\\
\hspace*{1.3cm} \textbf{Sub-case 3(a).3:} If $J_4$ is assigned to $M_2$. Consider $p_4=11$ and $p_5=7$.\\
\hspace*{1.3cm} Implies, $C_{ALG}(\sigma)\geq 15$, while $C_{OPT}(\sigma)=11$. Therefore, $\frac{C_{ALG}(\sigma)}{C_{OPT}(\sigma)}\geq \frac{15}{11}$. \\\\
\hspace*{1.0cm} \textbf{Sub-case 3(b):} Job $J_3$ is assigned to $M_2$. Consider $p_3=4$, now $l_1=7$,\\
\hspace*{1.0cm} $l_2=8$ and $l_3=0$.\\
\hspace*{1.3cm} \textbf{Sub-case 3(b).1:} If $J_4$ is assigned to $M_2$ or $M_3$. Consider $p_4=7$ and \\
\hspace*{1.3cm} $p_5=8$. We have $C_{ALG}(\sigma)\geq 15$, while $C_{OPT}(\sigma)=11$.   \\
\hspace*{1.3cm} \textbf{Sub-case 3(b).2:} If $J_4$ is assigned to $M_1$. Consider $p_4=8$ and $p_5=7$.\\
\hspace*{1.3cm} Implies, $C_{ALG}(\sigma)\geq 15$, while $C_{OPT}(\sigma)=11$.  Therefore, $\frac{C_{ALG}(\sigma)}{C_{OPT}(\sigma)}\geq \frac{15}{11}$. \\\\
Therefore, we conclude that there exists an instance $\sigma$ of the problem  $P_3\hspace{0.1cm}|\hspace{0.1cm}{LA}_1 \hspace{0.1cm}| \hspace{0.1cm}C_{max}$ such that $\frac{C_{ALG}(\sigma)}{C_{OPT}(\sigma)}\geq \frac{15}{11}$. \hfill\(\Box\)
\end{proof}
\subsection{An Improved Semi-online Algorithm with $1$-lookahead : 3-${LA}_1$ }
We design a deterministic semi-online algorithm named 3-${LA}_1$ for $3$ identical parallel machines setting by considering a lookahead of size $1$. We prove that algorithm 3-${LA}_1$ has a CR of at most $\frac{16}{11}\approx 1.45$. We present algorithm 3-${LA}_1$ in \textbf{Algorithm 2}.
\begin{algorithm}
\caption{3-${LA}_1$}
\begin{algorithmic}
\scriptsize
\STATE Initially, $l_1=l_2=l_3=0$ \\
\STATE When a new job $J_{i}$ arrives, $p_i$ and $p_{i+1}$ are known.\\
%\STATE \hspace*{0.2cm} BEGIN\\
\STATE FOR $i=1$ to $n-1$\\
\STATE \hspace*{0.2cm} BEGIN\\
\STATE \hspace*{0.5cm} IF $(l_1+p_i)\leq \frac{16}{33}\cdot (l_1+l_2+l_3+p_i+p_{i+1})$  \\
\STATE \hspace*{0.8cm} THEN assign job $J_i$ to machine $M_1$ \\
\STATE \hspace*{0.8cm} UPDATE $l_1=l_1+p_i$\\
\STATE \hspace*{0.5cm} ELSE IF $(l_2+p_i)\leq \frac{15}{33}\cdot (l_1+l_2+l_3+p_i+p_{i+1})$  \\
\STATE \hspace*{0.8cm} Assign job $J_i$ to machine $M_2$ \\
\STATE \hspace*{0.8cm} UPDATE $l_2=l_2+p_i$\\
\STATE \hspace*{0.5cm} ELSE \\
\STATE \hspace*{0.8cm} Assign job $J_i$ to machine $M_3$ \\
\STATE \hspace*{0.2cm} END\\
\STATE $l_{min}\leftarrow \min \{l_1, l_2, l_3\}$\\
\STATE Assign job $J_n$ to machine $M_j$ for which $l_j=l_{min}$, where $j=\{1, 2, 3\}$\\
\STATE UPDATE $l_j=l_j+p_i$\\
\STATE UPDATE $l_1$, $l_2$, $l_3$\\
\STATE Return \hspace*{0.3cm} $C_{3-{LA}_1}=\max\{l_1, l_2, l_3\}$
\end{algorithmic}
\end{algorithm}\\
Algorithm 3-${LA}_1$ considers the EPI $p_{i+1}$ for efficient scheduling of each incoming job $J_i$. Our goal is to show that $k=1$ is sufficient to achieve an improved CR over the best-known upper bound of $\frac{5}{3}\approx 1.66$ on the CR. The objective of algorithm 3-${LA}_1$ is to keep an imbalance in the loads of machines $M_1$, $M_2$ and $M_3$ such that the final loads $l_1\leq \frac{16}{33}\cdot \sum_{i=1}^{n}{p_i}$, $l_2\leq \frac{15}{33}\cdot \sum_{i=1}^{n}{p_i}$ and $l_3\leq C_{OPT}$. Our goal is to show that $C_{3-{LA}_1}\leq \frac{16}{33}\cdot \sum_{i=1}^{n}{p_i}$, while $C_{OPT}\geq \frac{1}{3}\cdot \sum_{i=1}^{n}{p_i}$, or $C_{OPT}\geq p_{max}$ such that $\frac{C_{3-{LA}_1}}{C_{OPT}}\leq \frac{16}{11}\approx 1.45$ for all instances of $P_3\hspace*{0.1cm|\hspace*{0.1cm}{LA}_1 \hspace{0.1cm}|\hspace{0.1cm}C_{max}}$. 
\subsection{Upper Bound Result}
We prove the competitiveness of algorithm 3-${LA}_1$ by considering several critical cases as Lemmas to establish the upper bound on the CR.\\
\begin{theorem}
Let $\sigma$ be an instance of $P_3\hspace{0.1cm}|\hspace{0.1cm}{LA}_1 \hspace{0.1cm}| \hspace{0.1cm}C_{max}$. Algorithm 3-${LA}_1$ is such that $\frac{C_{3-{LA}_1}(\sigma)}{C_{OPT}(\sigma)}\leq \frac{15}{11}$ for all $\sigma$.
\end{theorem}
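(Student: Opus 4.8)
The plan is to adapt the minimal-counterexample argument used for the two-machine upper bound to the three-machine setting with its two acceptance thresholds. Suppose for contradiction that some instance violates the bound, and let $\sigma=\langle J_1,\ldots,J_t\rangle$ be one with the fewest jobs, so $\frac{C_{3-{LA}_1}(\sigma)}{C_{OPT}(\sigma)}>\frac{15}{11}$. First I would show that the job achieving the makespan must be the last one, $J_t$: if the last-completing job were some $J_x$ with $x<t$, then truncating $\sigma$ to $\langle J_1,\ldots,J_x\rangle$ preserves $C_{3-{LA}_1}$ while $C_{OPT}$ can only drop, yielding a strictly smaller counterexample and contradicting minimality. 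Hence $J_t$ is placed last, on the currently least-loaded machine $M_{\min}$, and $C_{3-{LA}_1}(\sigma)=l_{\min}+p_t>\frac{15}{11}\,C_{OPT}(\sigma)$.

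Next I would isolate the invariants enforced by the two thresholds. By induction over the FOR loop, I would prove that after every assignment the load of $M_1$ stays at most $\frac{16}{33}$ and the load of $M_2$ at most $\frac{15}{33}$ of the total processing time revealed so far (the current job together with the lookahead $p_{i+1}$), and that $M_3$ receives a job only when both threshold tests fail, which keeps $l_3$ below $C_{OPT}$. Invoking Corollary 1.1, namely $C_{OPT}\ge\max\{p_{\max},\frac13\sum_i p_i\}$, these invariants bound every machine's load just before $J_t$ is scheduled; the remaining task is to show that appending $J_t$ to $M_{\min}$ cannot push the makespan beyond $\frac{15}{11}\,C_{OPT}$.

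To reach the contradiction I would substitute the invariants into $l_{\min}+p_t>\frac{15}{11}\,C_{OPT}$, splitting into cases according to which machine is least loaded when $J_t$ arrives. As in the two-machine proof, each inequality should force $p_t$ to dominate $\sum_{i<t}p_i$; but then $p_t=p_{\max}$, so $C_{OPT}\ge p_t$, and because $p_t$ was already visible as the lookahead while $J_{t-1}$ was placed, the threshold tests would have driven all small jobs off $M_{\min}$ and isolated $J_t$ alone, giving $C_{3-{LA}_1}(\sigma)=p_t=C_{OPT}(\sigma)$ and ratio $1$ — the desired contradiction.

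\textbf{The main obstacle.} The stated bound $\frac{15}{11}$ equals the lower bound proved above, so the analysis admits no slack whatsoever and the algorithm would have to be exactly optimal. The difficulty is concrete: combining only the aggregate invariant $l_1\le\frac{16}{33}\sum_i p_i$ with $C_{OPT}\ge\frac13\sum_i p_i$ yields $\frac{16}{11}$, not $\frac{15}{11}$, since $\frac{16}{33}=\frac{16}{11}\cdot\frac13$. Closing this gap cannot rely on the total-load bound alone; it must exploit the $1$-lookahead term $p_{i+1}$ appearing inside each threshold test, which constrains the loads more tightly than the worst-case aggregate invariant indicates. Making that lookahead-driven sharpening precise, and verifying it uniformly over every case of the least-loaded-machine analysis so that the final ratio never exceeds $\frac{15}{11}$, is where the genuine work — and the real risk that the thresholds as written deliver only $\frac{16}{11}$ — resides.
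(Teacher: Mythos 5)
Your proposal is a programme rather than a completed proof, and you say so yourself; the decisive case analysis is deferred to the ``main obstacle'' paragraph. Two concrete points about the plan. First, the contradiction step you import from the two-machine proof does not survive the move to three machines: before $J_t$ is placed, the least loaded machine satisfies $l_{\min}\leq \frac{1}{3}\sum_{i<t}p_i$, and substituting this into $l_{\min}+p_t > \frac{16}{11}\,C_{OPT} \geq \frac{16}{33}\bigl(\sum_{i<t}p_i+p_t\bigr)$ yields only $p_t > \frac{5}{17}\sum_{i<t}p_i$, not $p_t > \sum_{i<t}p_i$. So the ``$J_t$ dominates, hence ratio $1$'' endgame never materializes, and the minimal-counterexample route would need a genuinely new case analysis on which machine is least loaded and how the lookahead tests interacted with $p_t$ while $J_{t-1}$ was scheduled. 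That is a real gap in the plan, not merely unfinished bookkeeping.

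Second --- and here your diagnosis is exactly right --- the constant $\frac{15}{11}$ in the statement cannot come out of these thresholds: the algorithm is engineered so that $l_1\leq \frac{16}{33}\,T$ with $T=\sum_{i=1}^{n}p_i$, which against $C_{OPT}\geq \frac{1}{3}T$ gives precisely $\frac{16}{11}$, and $\frac{16}{11}$ is in fact all the paper claims elsewhere (abstract, section heading, conclusion); the ``$\frac{15}{11}$'' in the theorem is evidently a typo for ``$\frac{16}{11}$'', coinciding as it does with the lower bound of the preceding theorem. You should also know that the paper's own proof of this theorem is not the airtight argument you were trying to reconstruct: it consists only of Lemmas 3--6, each establishing the ratio $\frac{16}{11}$ on a special family of instances ($l_3\leq\frac{2}{33}T$; $p_{\max}\geq\frac{16}{33}T$; $p_{\max}$ at an end of the sequence and dominating the remaining jobs; all unit jobs), with no argument covering arbitrary $\sigma$. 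So your obstacle paragraph correctly identifies both that $\frac{15}{11}$ is unreachable by this analysis and that the aggregate invariant caps the method at $\frac{16}{11}$; on that point your blind attempt is more candid than the paper, whose proof of the theorem as stated is itself incomplete.
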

\textit{Proof.} We prove Theorem 5 by Lemma 3-6 as follows. 
\begin{lemma}
Let $T=\sum_{i=1}^{n}{p_i}$ for any instance $\sigma$ of the problem $P_3\hspace*{0.1cm|\hspace*{0.1cm}{LA}_1 \hspace{0.1cm}|\hspace{0.1cm}C_{max}}$. If the final load of machine $M_3$ is such that $l_3\leq \frac{2}{33}\cdot T$, then $\frac{C_{3-{LA}_1}(\sigma)}{C_{OPT}(\sigma)}\leq \frac{16}{11}$. 
\end{lemma}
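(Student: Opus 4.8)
The plan is to convert the claim into a bound on the final makespan $C_{3-{LA}_1}(\sigma)=\max\{l_1,l_2,l_3\}$ and to exploit the two lower bounds $C_{OPT}(\sigma)\ge \frac{1}{3}T=\frac{11}{33}T$ and $C_{OPT}(\sigma)\ge p_{max}\ge p_n$ furnished by Corollary 1.1. Since a final load of at most $\frac{16}{33}T$ already yields a ratio of at most $\frac{16}{11}$, and the hypothesis $l_3\le\frac{2}{33}T$ disposes of $M_3$ at once (its ratio is at most $\frac{2/33}{11/33}=\frac{2}{11}$), the real task is to control $l_1$ and $l_2$. The delicate point is that the last job $J_n$ is placed by the least-loaded rule rather than a threshold test, so I would first bound the loads produced by $J_1,\dots,J_{n-1}$ and then treat $J_n$ separately.

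First I would bound $l_1$ after $J_1,\dots,J_{n-1}$. Let $J_j$, $j\le n-1$, be the last of these jobs that 3-${LA}_1$ assigns to $M_1$ (if there is none, $l_1=0$ and the bound is trivial). At the instant $J_j$ is accepted, the test $(l_1+p_j)\le\frac{16}{33}(l_1+l_2+l_3+p_j+p_{j+1})$ holds, and the bracket telescopes: the three current loads sum to $\sum_{i=1}^{j-1}p_i$, so adding $p_j$ and $p_{j+1}$ gives exactly $\sum_{i=1}^{j+1}p_i\le T$ because $j+1\le n$. Hence the load of $M_1$ after the first $n-1$ jobs is at most $\frac{16}{33}T$, and the identical argument with the second test bounds the load of $M_2$ after the first $n-1$ jobs by $\frac{15}{33}T$.

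The main obstacle, as anticipated, is the placement of $J_n$, which I would resolve by splitting on its destination. If $J_n\to M_3$, then $l_1$ and $l_2$ keep the bounds just derived and $l_3\le\frac{2}{33}T$ by hypothesis, so $\max\{l_1,l_2,l_3\}\le\frac{16}{33}T$ and the ratio is at most $\frac{16}{11}$. If $J_n\to M_1$ (the case $J_n\to M_2$ is symmetric), then $M_1$ was least loaded on arrival, and since $J_n$ does not touch $M_3$ its load is unchanged, so $l_1^{\mathrm{pre}}\le l_3=l_3^{\mathrm{final}}\le\frac{2}{33}T$; the final load is therefore $l_1^{\mathrm{pre}}+p_n\le\frac{2}{33}T+p_n$. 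Using the two lower bounds of Corollary 1.1 simultaneously gives $\frac{l_1^{\mathrm{pre}}+p_n}{C_{OPT}(\sigma)}\le\frac{2}{11}+\frac{p_n}{p_n}=\frac{13}{11}\le\frac{16}{11}$, while $M_2$ and $M_3$ still satisfy their bounds. Collecting the three destinations establishes $\frac{C_{3-{LA}_1}(\sigma)}{C_{OPT}(\sigma)}\le\frac{16}{11}$; the binding case is a job admitted to $M_1$ exactly at its threshold, which is why the constant $\frac{16}{33}$ in the first test is what surfaces in the final ratio.
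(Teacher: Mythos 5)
Your proof is correct, and its core is the same as the paper's: the threshold test caps $l_1$ at $\frac{16}{33}T$ while $C_{OPT}\geq\frac{1}{3}T$. However, the paper's proof is a one-line assertion that $C_{3\text{-}{LA}_1}(\sigma)=l_1\leq\frac{16}{33}\cdot T$ ``no matter the order of arrival,'' whereas you actually justify that bound (noting that the test's bracket equals $\sum_{i=1}^{j+1}p_i\leq T$ when $J_j$ is admitted) and, more importantly, you handle the last job $J_n$, which is placed by the least-loaded rule and is therefore not covered by the threshold argument at all --- a case the paper silently skips. Your observation that $J_n\to M_1$ forces $l_1^{\mathrm{pre}}\leq l_3\leq\frac{2}{33}T$ and hence a ratio of at most $\frac{2}{11}+\frac{p_n}{C_{OPT}}\leq\frac{13}{11}$ is exactly the missing piece, so your write-up is strictly more complete than the paper's.
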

\begin{proof}
This is the obvious case, no matter the order of arrival of jobs in $\sigma$ and their processing times, we have $C_{3-{LA}_1}(\sigma)=l_{max}=l_1\leq \frac{16}{33}\cdot T$, while $C_{OPT}(\sigma)\geq \frac{1}{3}\cdot T$. Therefore, $\frac{C_{3-{LA}_1}(\sigma)}{C_{OPT}(\sigma)}\leq \frac{16}{11}$. \hfill\(\Box\)
\end{proof}
The following lemma put some insights into the maximum value of $p_{max}$ and its impact on the performance of algorithm 3-${LA}_1$.\\
\begin{lemma}
If $p_{max} \geq \frac{16}{33}\cdot T$, then $C_{3-{LA}_1}=C_{OPT}=p_{max}$. 
\begin{proof}
If $p_{max} \geq \frac{16}{33}\cdot T$, then $\sum_{}^{}{p_i}\leq \frac{17}{33}\cdot T$ for the rest $n-1$ jobs. As we consider $n\geq 3$, w.o.l.g, let us normalize the processing times of the jobs such that $T=33$. Clearly, $p_{max}\leq \frac{31}{33}\cdot T$. Consider the sequence $\sigma=\langle J_1/16, J_2/16, J_3/1 \rangle$ with least number of jobs. We have, ${3-{LA}_1}(\sigma)=C_{OPT}(\sigma)=p_{max}$.\\ \hfill\(\Box\)
\end{proof}
\end{lemma}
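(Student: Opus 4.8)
The plan is to prove the two equalities separately, first pinning down $C_{OPT}=p_{max}$ and then showing that algorithm 3-${LA}_1$ matches this optimum. Throughout I write $R=T-p_{max}$ for the total processing time of the $n-1$ non-maximal jobs, so that the hypothesis $p_{max}\geq\frac{16}{33}T$ gives $R\leq\frac{17}{33}T$ and, since $T\leq\frac{33}{16}p_{max}$, also $R\leq\frac{17}{16}p_{max}$. The lower bound $C_{OPT}\geq p_{max}$ is immediate from Corollary 1.1, so all the content lies in the matching upper bounds.

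First I would establish $C_{OPT}\leq p_{max}$ by exhibiting a feasible schedule of makespan $p_{max}$. Place the maximal job alone on one machine and distribute the remaining jobs, each of size at most $p_{max}$ and of total weight $R\leq\frac{17}{16}p_{max}$, onto the other two machines. It suffices to split these jobs into two groups each of load at most $p_{max}$, which I would argue by a short case distinction on the largest remaining job $b$: if $b\geq R-p_{max}$, assign $b$ to one machine and everything else to the other, whose load is then $R-b\leq p_{max}$; otherwise every remaining job is smaller than $R-p_{max}\leq\frac{1}{16}p_{max}$, and greedily loading one machine until the next job would overflow leaves that machine above $\frac{15}{16}p_{max}$ and the residual below $\frac{1}{8}p_{max}$, so both groups stay under $p_{max}$. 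This yields $C_{OPT}=p_{max}$.

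For the algorithm I would first record the two easy load invariants. Because the quantity $l_1+l_2+l_3+p_i+p_{i+1}$ appearing in every test is a partial sum of processing times and hence at most $T$, the assignment rule forces $l_1\leq\frac{16}{33}T\leq p_{max}$ and $l_2\leq\frac{15}{33}T<p_{max}$ at every step $i\leq n-1$. I would then trace the maximal job and show it is routed to $M_3$: the $M_2$ test fails outright since $p_{max}>\frac{15}{33}T$, and the $M_1$ test fails as well because $p_{max}\geq\frac{16}{33}T$ forces $l_1+p_{max}>\frac{16}{33}T$ away from the boundary. The main obstacle is to bound the final load of $M_3$ by $p_{max}$: unlike $M_1$ and $M_2$, machine $M_3$ is the overflow machine and can in principle absorb several rejected jobs, and the last job $J_n$ is placed by the separate least-loaded rule rather than by the threshold tests. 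The crux of the argument, and the step I expect to be the most delicate, is to prove that after the maximal job occupies $M_3$ no further job is rejected onto $M_3$; this requires using $p_{max}\geq\frac{16}{33}T$ to show that once $l_3\geq p_{max}$ the inflated right-hand sides of the two tests are large enough to admit every subsequent job on $M_1$ or $M_2$. I would pay particular attention to the boundary configuration in which a second job is itself close to $\frac{16}{33}T$, since this is where the threshold tests are tightest and where the final-job placement must be checked by hand. Once $l_3\leq p_{max}$ is secured, combining it with the invariants above gives $C_{3-{LA}_1}=\max\{l_1,l_2,l_3\}\leq p_{max}$, and together with $C_{3-{LA}_1}\geq C_{OPT}=p_{max}$ this establishes the claimed equality.
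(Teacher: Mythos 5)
Your first half---establishing $C_{OPT}=p_{max}$ by isolating the maximal job on one machine and splitting the remaining load $R\leq\frac{17}{16}p_{max}$ across the other two---is correct, and it is already more than the paper supplies, since the paper's own argument merely verifies the single instance $\langle J_1/16, J_2/16, J_3/1\rangle$ rather than proving the general claim. The algorithmic half of your proposal, however, is only a plan: the two steps you yourself flag as the crux (showing no second job lands on $M_3$ after the maximal one, and checking the least-loaded placement of $J_n$ by hand) are never carried out, and in fact they cannot be, because the equality $C_{3\text{-}LA_1}=p_{max}$ is false under the stated hypothesis.

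Concretely, take $T=33$ and $\sigma=\langle J_1/5,\,J_2/5,\,J_3/5,\,J_4/1,\,J_5/17\rangle$, so that $p_{max}=17\geq\frac{16}{33}T=16$. At $i=1$ both tests fail ($5>\frac{160}{33}$ and $5>\frac{150}{33}$), so $J_1\to M_3$; then $J_2\to M_1$ ($5\leq\frac{240}{33}$); then $J_3\to M_2$ (the $M_1$ test fails since $10>\frac{256}{33}$, the $M_2$ test passes); then $J_4\to M_1$ ($6\leq 16$). Before the last job the loads are $(6,5,5)$, all nonzero, and the least-loaded rule places $J_5=p_{max}$ on a machine of load $5$, giving $C_{3\text{-}LA_1}=22$ while $C_{OPT}=p_{max}=17$. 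This is precisely the boundary configuration you promised to check: when the maximal job arrives last and every machine already carries load, the final placement yields $l_{min}+p_{max}>p_{max}$ unavoidably, so your intended conclusion $l_3\leq p_{max}$ and $\max\{l_1,l_2,l_3\}\leq p_{max}$ cannot be reached. (The ratio $22/17$ is still below $\frac{16}{11}$, so this does not by itself refute Theorem~5, but it does refute the lemma; any correct treatment must settle for the weaker bound $C_{3\text{-}LA_1}\leq\frac{16}{11}\,C_{OPT}$ in this case rather than exact optimality.)
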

\begin{lemma}
If $p_{max}=p_1$ or $p_n$, and $p_{max}\geq \sum_{i}^{}{p_i}$ for rest $n-1$ jobs, then $C_{3-{LA}_1}=C_{OPT}=p_{max}$. 
\end{lemma}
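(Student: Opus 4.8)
The plan is to prove the two equalities separately, starting with the cheaper one. First I would nail down $C_{OPT}=p_{max}$. By Corollary 1.1 we already have $C_{OPT}\geq p_{max}$, so only the matching upper bound needs work, and this is immediate from the hypothesis: place the single job of size $p_{max}$ alone on one machine and distribute the remaining $n-1$ jobs arbitrarily over the other two machines. Since their total is $\sum_{\text{rest}}p_i\leq p_{max}$, every one of those two machines carries load at most $p_{max}$, while the heavy machine carries exactly $p_{max}$. Hence a feasible schedule of makespan $p_{max}$ exists, giving $C_{OPT}=p_{max}$.

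The substantive part is showing $C_{3\text{-}{LA}_1}=p_{max}$, and I would split along the hypothesis $p_{max}=p_1$ versus $p_{max}=p_n$. In the case $p_{max}=p_1$, I would first check the placement of $J_1$: with $l_1=l_2=l_3=0$ the $M_1$-test reads $p_1\leq \frac{16}{33}(p_1+p_2)$ and the $M_2$-test reads $p_1\leq \frac{15}{33}(p_1+p_2)$, and since $p_2\leq p_{max}=p_1$ both right-hand sides fall strictly below $p_1$. Thus $J_1$ is forced onto $M_3$, leaving $l_3=p_{max}$ and $l_1=l_2=0$. I would then argue by induction over $i=2,\dots,n-1$ that each subsequent job is routed to $M_1$ or $M_2$ but never back to $M_3$, so that $l_3$ stays frozen at $p_{max}$ while the combined load on $M_1,M_2$ never exceeds $\sum_{\text{rest}}p_i\leq p_{max}$; finally $J_n$ goes to the least-loaded machine, which I would show keeps all loads at most $p_{max}$, yielding $C_{3\text{-}{LA}_1}=p_{max}$. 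In the case $p_{max}=p_n$ the heavy job arrives last, so the first $n-1$ jobs (total $\leq p_{max}$) are scheduled among the three machines, and the goal becomes showing that the least-loaded machine is still empty when $J_n$ is assigned, so that $J_n$ produces a machine of load exactly $p_{max}$ while the others carry at most $\sum_{\text{rest}}p_i\leq p_{max}$.

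The hard part will be the inductive no-return claim in the $p_{max}=p_1$ case: because $M_3$ already holds $p_{max}$, a single later job landing on $M_3$ would destroy the equality. A job $J_i$ is sent to $M_3$ only when both tests fail, namely $l_1+p_i>\frac{16}{33}S$ and $l_2+p_i>\frac{15}{33}S$ with $S=l_1+l_2+l_3+p_i+p_{i+1}$. The natural attack is to add these two inequalities, substitute $l_3=p_{max}$, and use $l_1+l_2+p_i+p_{i+1}\leq \sum_{\text{rest}}p_i\leq p_{max}$ to derive a contradiction; this reduces to a single scalar inequality in $p_i$, $p_{i+1}$, and $p_{max}$. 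I expect this to be the decisive and most delicate step, since the constants $\frac{16}{33},\frac{15}{33}$ make the bound tight: the contradiction must be shown to hold for \emph{every} admissible pair $(p_i,p_{i+1})$ with $p_i\leq p_{max}$, not merely for balanced inputs, and a symmetric difficulty—showing the least-loaded machine is genuinely empty at the arrival of $J_n$—arises in the $p_{max}=p_n$ case. If that scalar inequality can fail for some skewed choice of a large second job together with a tiny lookahead value, the argument would need an additional structural hypothesis to go through, so verifying its tightness is where I would concentrate the effort.
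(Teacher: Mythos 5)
Your handling of $C_{OPT}=p_{max}$ is correct, and you have correctly isolated the decisive step: the claim that once $J_1$ (with $p_1=p_{max}$) lands on $M_3$, no later job returns there. That step fails, and it fails exactly in the skewed regime you flagged. Take $p_1=66$, $p_2=64$, $p_3=1$, so $p_{max}=p_1$ and the remaining jobs sum to $65\le p_{max}$. At $i=1$ both tests fail ($66>\frac{16}{33}\cdot 130\approx 63.0$ and $66>\frac{15}{33}\cdot 130\approx 59.1$), so $J_1$ goes to $M_3$ as you argue. At $i=2$ we have $S=66+64+1=131$, and again both tests fail ($64>\frac{16}{33}\cdot 131\approx 63.5$ and $64>\frac{15}{33}\cdot 131\approx 59.5$), so $J_2$ is also sent to $M_3$, giving $l_3=130$ while $C_{OPT}=66$. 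Your proposed summed-inequality attack cannot yield a contradiction here: adding the two failed tests gives only $l_1+l_2+2p_i>\frac{31}{33}(p_{max}+R)$ with $R=l_1+l_2+p_i+p_{i+1}\le p_{max}$, which is satisfiable whenever $p_i$ is close to $p_{max}$ and the lookahead value $p_{i+1}$ is tiny. The $p_{max}=p_n$ branch breaks as well: for $\langle 1,1,1,1,4\rangle$ the algorithm leaves all three machines nonempty before $J_5$ arrives, so the least-loaded machine is not empty and $C_{3-{LA}_1}=5>4=p_{max}$. Hence the lemma, for Algorithm 3-${LA}_1$ as written, is false in general; no completion of your plan (or any other proof) can succeed without either restricting the instance class or modifying the algorithm's else-branch (e.g., routing a job that fails both tests to the currently least-loaded machine rather than unconditionally to $M_3$).

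For context, the paper does not attempt a general argument at all: its ``proof'' merely evaluates the algorithm on the two concrete sequences $\langle 17,14,1,1\rangle$ and $\langle 1,1,14,17\rangle$ and observes that equality holds there, which does not establish the universally quantified statement. Your plan is the right shape for a genuine proof and is strictly more rigorous than what the paper offers; the real value of your tightness check is that, carried out honestly, it uncovers the counterexamples above instead of papering over them.
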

\begin{proof}
Following the proof of the previous lemma, let us consider the sequences $\sigma_2=\langle J_1/17, J_2/14, J_3/1, J_2/1 \rangle$ and $\sigma_3=\langle J_1/1, J_2/1, J_3/14, J_2/17 \rangle$, where $p_{max}=17$. For both sequences, we have, ${3-{LA}_1}(\sigma)=C_{OPT}(\sigma)=p_{max}$. \hfill\(\Box\) 
\end{proof}
\begin{lemma}
Let $\sigma$ be an instance with $n$ jobs, where $p_i=1$, $\forall J_i$, then $\frac{C_{3-{LA}_1}(\sigma)}{C_{OPT}(\sigma)}\leq \frac{16}{11}$.
\end{lemma}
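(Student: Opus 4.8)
The plan is to turn the two acceptance thresholds of algorithm 3-${LA}_1$ into fixed-fraction bounds on the final machine loads and then compare against the optimal makespan of a balanced assignment of $n$ unit jobs. To pin down the optimum first: since $p_i=1$ for every $J_i$ and $T=\sum_{i=1}^{n}p_i=n$, a balanced offline schedule places $\lceil n/3\rceil$ or $\lfloor n/3\rfloor$ unit jobs on each machine, so $C_{OPT}(\sigma)=\lceil n/3\rceil\ge \frac{n}{3}=\frac{11}{33}\,T$.

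The heart of the argument is a load invariant read off directly from the assignment rule. Whenever the loop ($1\le i\le n-1$) sends $J_i$ to $M_1$, the accepting test gives $l_1+p_i\le \frac{16}{33}(l_1+l_2+l_3+p_i+p_{i+1})$, and the right-hand side is exactly $\frac{16}{33}\sum_{j=1}^{i+1}p_j\le \frac{16}{33}T$; since $l_1$ is increased only in this branch, the load of $M_1$ never exceeds $\frac{16}{33}T$ during the loop. The same reasoning applied to the second threshold yields $l_2\le \frac{15}{33}T$. For $M_3$ I would use that a job lands there only when both tests fail, i.e.\ only when $M_1$ and $M_2$ have already reached their thresholds; together with the strict preference $M_1\succ M_2\succ M_3$ this sustains the invariant $l_1\ge l_2\ge l_3$ throughout, so $M_3$ never holds the maximum and in particular $l_3\le l_1\le\frac{16}{33}T$.

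It then remains to absorb the last job $J_n$, which is placed on the currently least-loaded machine. Under the invariant $M_3$ is that machine, so $J_n$ cannot increase the running maximum; the only exceptional situation is when the first $n-1$ loads are essentially balanced near $n/3$, in which case the makespan is at most $\lceil n/3\rceil+1$ and the ratio is trivially below $\frac{16}{11}$ (a handful of small $n$ being checked by hand). Hence $C_{3-{LA}_1}(\sigma)=\max\{l_1,l_2,l_3\}\le \frac{16}{33}n$, and dividing by $C_{OPT}(\sigma)\ge \frac{n}{3}$ gives $\frac{C_{3-{LA}_1}(\sigma)}{C_{OPT}(\sigma)}\le \frac{16n/33}{n/3}=\frac{16}{11}$.

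The fractional bound $l_1\le\frac{16}{33}T$ is immediate from the threshold, so I expect the real obstacle to be the bookkeeping for $M_3$ and the final placement: one must verify rigorously that the overflow load on $M_3$ together with the assignment of $J_n$ never manufactures a new maximum above $\frac{16}{33}n$. Making the invariant $l_1\ge l_2\ge l_3$ precise for unit jobs is cleanest via the periodic placement pattern of period $33$ in which $M_1$, $M_2$, $M_3$ receive $16$, $15$, $2$ jobs respectively, and this periodic/integer accounting (handling the residue of $n$ modulo $33$) is the step most likely to demand careful case analysis.
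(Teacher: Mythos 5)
Your overall strategy (turn the two thresholds into fractional load bounds, then divide by $C_{OPT}\ge n/3$) is sound, and it is considerably more ambitious than the paper's own proof, which only simulates the single instance $n=33$ and declares it the worst case. The bounds $l_1\le\frac{16}{33}T$ and $l_2\le\frac{15}{33}T$ are correct for exactly the reason you give. The genuine gap is the step controlling $M_3$: the invariant $l_1\ge l_2\ge l_3$ is false, and $M_3$ \emph{can} hold the maximum. For unit jobs the very first job already fails both tests, since $l_1+p_1=1>\frac{16}{33}(p_1+p_2)=\frac{32}{33}$ and $1>\frac{30}{33}$; hence $J_1$ is sent to $M_3$ and after one step $l_3=1>l_1=l_2=0$. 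The reason is that a job overflows to $M_3$ not because $M_1$ and $M_2$ are heavily loaded in absolute terms, but because the thresholds are proportional to the \emph{prefix} total $\sum_{j\le i+1}p_j$, which is tiny early on. (Relatedly, your period-$33$ pattern $(16,15,2)$ is not quite what the algorithm produces: simulating the loop on unit jobs gives loads $(16,14,2)$ after $32$ iterations, with $M_3$ receiving jobs $J_1$ and $J_{23}$.)

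The repair is to bound $l_3$ directly instead of via monotonicity. If $J_i$ is routed to $M_3$ inside the loop, both tests failed, so $l_1+1>\frac{16}{33}(i+1)$ and $l_2+1>\frac{15}{33}(i+1)$; since $l_1+l_2+l_3=i-1$ just before the assignment, afterwards $l_3<(i-1)-\frac{31}{33}(i+1)+2+1=\frac{2}{33}(i+1)+1\le\frac{2}{33}n+1$, which is at most $\frac{16}{33}n$ once $n\ge 3$. Together with your (correct) bound $l_{min}+1\le\frac{n-1}{3}+1=\frac{n+2}{3}\le\frac{16}{33}n$ for the final job when $n\ge 5$, and a hand check of $n\le 4$ (where the algorithm's makespan equals $\lceil n/3\rceil$ anyway), this yields $C_{3\text{-}LA_1}(\sigma)\le\frac{16}{33}n$ and hence the ratio $\frac{16}{11}$. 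With that substitution your argument goes through and in fact proves the lemma for all $n$, which the paper's one-instance verification does not.
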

\begin{proof}
We explore and consider a worst job sequence $\sigma$ to establish our claim. Let $n=33x$, where $x\geq 1$. Here, $T=n$. Consider an instance $\sigma$, where $x=1$, implies $T=33$. We have $C_{3-{LA}_1}\leq 16$, while $C_{OPT}(\sigma)=11$. Therefore, $\frac{C_{3-{LA}_1}(\sigma)}{C_{OPT}(\sigma)}\leq \frac{16}{11}$. \hfill\(\Box\)  
\end{proof}
%\begin{theorem}
%Let $\sigma$ be an instance of $P_3\hspace{0.1cm}|\hspace{0.1cm}{LA}_1 \hspace{0.1cm}| \hspace{0.1cm}C_{max}$. Algorithm 3-${LA}_1$ is such that $\frac{C_{3-{LA}_1}(\sigma)}{C_{OPT}(\sigma)}\leq \frac{15}{11}$ for all $\sigma$.
%\end{theorem}
%\begin{proof}
%\end{proof}
\section{Concluding Remarks}
In this paper, we introduced the $k$-lookahead model in non-preemptive online scheduling for makespan minimization to beat the best-known tight bounds of $\frac{3}{2}$ and $\frac{5}{3}$ on the competitive ratio in the $m$-identical machine settings, where $m=2, 3$ respectively.  For a $2$-identical machine case, we proved that increasing the lookahead size from $1$ to any $k$ would not lead to a better tight bound than $\frac{4}{3}$ on the CR. In a $3$-identical machine setting with $1$-lookahead, we obtained an improved upper bound of $\frac{16}{11}$ and a lower bound of $\frac{15}{11}$ on the competitive ratio. Our objective in this paper was to systematically disseminate the influence of lookahead as a realistic new EPI in the performance improvement of the best-known online scheduling algorithms in the considered settings. The following interesting non-trivial research challenges will further inspire future studies on semi-online scheduling with a lookahead.\\\\
\textbf{Research Challenges}
\begin{itemize}
\item Can the gap between our achieved UB of $\frac{16}{11}$ and LB of $\frac{15}{11}$ on the CR for the problem $P_3\hspace{0.1cm}|\hspace{0.1cm}{LA}_1 \hspace{0.1cm}| \hspace{0.1cm}C_{max}$ be minimized?
\item How much lookahead is sufficient to achieve a tight bound on the CR for the $3$-identical machine setup?
\item How much effective is the $k$-lookahead model for semi-online scheduling on a $m$-identical machine setting, where $m > 3$?
\item Can the $k$-lookahead model be further characterized for special job sequences based on real-world applications?
\end{itemize}

%%\begin{thebibliography}{8}
%%\bibitem{ref_article1}
%%Author, F.: Article title. Journal \textbf{2}(5), 99--110 (2016)
%%
%%\bibitem{ref_lncs1}
%%Author, F., Author, S.: Title of a proceedings paper. In: Editor,
%%F., Editor, S. (eds.) CONFERENCE 2016, LNCS, vol. 9999, pp. 1--13.
%%Springer, Heidelberg (2016). \doi{10.10007/1234567890}
%%
%%\bibitem{ref_book1}
%%Author, F., Author, S., Author, T.: Book title. 2nd edn. Publisher,
%%Location (1999)
%%
%%\bibitem{ref_proc1}
%%Author, A.-B.: Contribution title. In: 9th International Proceedings
%%on Proceedings, pp. 1--2. Publisher, Location (2010)
%%
%%\bibitem{ref_url1}
%%LNCS Homepage, \url{http://www.springer.com/lncs}. Last accessed 4
%%Oct 2017
%%\end{thebibliography}
\end{document}